\theoremstyle{definition}
\newtheorem{lemma}{Lemma}
\newtheorem*{lemma*}{Lemma}
\newtheorem{prop*}{Proposition*}
\newtheorem{definition*}{Definition*}
\newtheorem{note*}{Note*}
\newtheorem{theorem}[lemma]{Theorem}
\newtheorem{theorem*}{Theorem*}
\newtheorem{col*}{Corollary*}
\title{FastFCA: a joint diagonalization based fast algorithm for audio source separation using a full-rank spatial covariance model
%a fast algorithm for full-rank spatial covariance analysis based on joint spatial decorrelation for blind source separation
} 
\name{Nobutaka Ito, Shoko Araki, Tomohiro Nakatani}
\address{NTT Communication Science Laboratories, NTT Corporation, Kyoto, Japan\\
\{ito.nobutaka, araki.shoko, nakatani.tomohiro\}@lab.ntt.co.jp
}
\newtheorem{algorithm}{Algorithm}
\begin{document}
\maketitle
\ninept
\begin{abstract}
A source separation method using a full-rank spatial covariance model has been proposed by Duong {\it et al.} [``Under-determined Reverberant Audio Source Separation Using a Full-rank Spatial Covariance Model,'' {\it IEEE Trans. ASLP}, vol.~18, no.~7, pp.~1830--1840, Sep. 2010], which is referred to as {\it full-rank spatial covariance analysis (FCA)} in this paper.
Here we propose a fast algorithm for estimating the model parameters of the FCA, which is named {\it FastFCA}, and applicable to the two-source case.
Though quite effective in source separation, the conventional FCA has a major drawback of expensive computation.
Indeed, the conventional algorithm for estimating the model parameters of the FCA
 requires {\it frame-wise} matrix inversion and matrix multiplication.
Therefore, the conventional FCA
may be infeasible in applications with restricted computational resources.
In contrast, the proposed FastFCA
 bypasses matrix inversion and matrix multiplication owing to joint diagonalization based on the generalized eigenvalue problem.
Furthermore, the FastFCA is strictly equivalent to the conventional algorithm. 
An experiment has shown that the FastFCA was over 250 times faster than the conventional algorithm with virtually the same source separation performance.
\end{abstract}
\begin{keywords}
Microphone arrays, source separation,
joint diagonalization,
generalized eigenvalue problem.
\end{keywords}
\section{Introduction}

Many audio source separation methods take a probabilistic approach, in which a probabilistic model of
 observed mixtures is designed
and some model parameters pertinent to the sources are estimated.
In such an approach, source separation performance
is largely dictated by precision of the probabilistic model.
In many conventional models, such as that in the well-known independent component analysis (ICA)~\cite{Cardoso1999,Lee1999,Hyvarinen2001,Sawada2005SE}, the acoustic transfer characteristics of each source signal are modeled by a time-invariant
steering vector. 
%However, a steering vector often fluctuates in the real world due to various causes, such as
% reverberation, fluctuation of the source location, deviation from the ideal point source, {\it etc.} 
%In such cases, the time-invariant steering vector model is inaccurate, leading to
%degradation in source separation performance.
In contrast, Duong {\it et al.}~\cite{Duong2010} have proposed modeling the acoustic transfer characteristics of each source signal by a full-rank matrix called a {\it
spatial covariance matrix}.
The latter model can properly take account of reverberation, fluctuation of source locations, deviation from the ideal point-source model, {\it etc.}, whereby
realizing effective source separation in the real world.
We call this method {\it full-rank spatial covariance analysis (FCA)}.

A major limitation of the conventional FCA is expensive computation.
Indeed, the conventional algorithm for estimating the model parameters of the FCA
 computes matrix inverses and matrix products {\it frame-wise}.
Therefore, the conventional FCA
may be infeasible in applications with restricted computational resources.
Such applications may include hearing aids, distributed microphone arrays, and online speech enhancement.

To overcome this limitation, here  we propose a fast algorithm for estimating the model parameters of the FCA, which is named {\it FastFCA}, and applicable to the two-source case.
The FastFCA does not require frame-wise computation of matrix inverses and matrix products,
and is therefore much faster than the conventional algorithm.
These frame-wise matrix operations are eliminated based on joint diagonalization of the spatial covariance matrices of the source signals. 
This is because the joint diagonalization reduces these matrix operations to mere scalar operations of diagonal entries.
The joint diagonalization is realized by solving
a generalized eigenvalue problem of the spatial covariance matrices of the two source signals.
In the two-source case, the exact joint diagonalization is possible,
and consequently the FastFCA is equivalent to the conventional algorithm, 
whereby causing no degradation in source separation performance compared to the FCA. Currently, the number of sources is limited to two in the FastFCA, 
and 
the extension to more than two sources is regarded as future work.

We follow the following conventions throughout the rest of this paper. Signals are represented in the short-time Fourier transform (STFT) domain
with the time and the frequency indices being $n$ and $f$ respectively. $N$ denotes the number of frames, 
$F$ the number of frequency bins up to the Nyquist frequency, 
$\mathcal{N}(\mathbf{m},\mathbf{R})$ the complex Gaussian distribution with mean $\mathbf{m}$ and covariance matrix $\mathbf{R}$,
$\mathcal{E}$ expectation, 
$\delta_{kl}$ the Kronecker delta,
$\mathbf{0}$ the column zero vector of an appropriate dimension,
$\mathbf{I}$ the identity matrix of an appropriate order,
$\text{diag}(\alpha_1,\alpha_2,\dots,\alpha_D)$ the diagonal matrix of order $D$ with $\alpha_k$ being its $(k,k)$ entry ($k=1,2,\dots,D$),
$(\cdot)^\textsf{T}$ transposition,
$(\cdot)^\textsf{H}$ Hermitian transposition, 
$\text{tr}(\cdot)$ the trace, and
$\det(\cdot)$ the determinant.

\section{Full-rank Spatial Covariance Matrix Analysis (FCA)}
\label{sec:FCA}
This section briefly describes the FCA~\cite{Duong2010}.

Let $\mathbf{y}(n,f)\in\mathbb{C}^I$ be the mixtures observed by $I$ microphones with
the $i$th entry corresponding to the $i$th microphone.
Let $\mathbf{x}_j(n,f)\in\mathbb{C}^I$ be the $j$th  source image, where $j\in\{1,2,\dots,J\}$ denotes the source index and $J$ the number of sources.
In this paper,
we focus on the two-source case ($J=2$).
The observed mixtures  are modeled as the sum of the source images as
$
\mathbf{y}(n,f)=\mathbf{x}_1(n,f)+\mathbf{x}_2(n,f).
$
We deal with the problem of estimating $\mathbf{x}_1(n,f)$ and $\mathbf{x}_2(n,f)$ from $\mathbf{y}(n,f)$. 
%This encompasses denoising and source separation.
%Denoising corresponds to the case in which one of  $\mathbf{x}_1(n,f)$ and $\mathbf{x}_2(n,f)$ is a source signal of interest, and the other is background noise. Source separation corresponds to the case in which both $\mathbf{x}_1(n,f)$ and $\mathbf{x}_2(n,f)$ are source signals.

In the FCA, the source signal $\mathbf{x}_j(n,f)$ is probabilistically modeled as
$
\mathbf{x}_j(n,f)\sim\mathcal{N}(\mathbf{0},\mathbf{R}_j(n,f))$, where $\mathbf{R}_j(n,f)$ denotes the covariance matrix of $\mathbf{x}_j(n,f)$. 
In the FCA, $\mathbf{R}_j(n,f)$ is parametrized as
\begin{align}
\mathbf{R}_j(n,f)=v_j(n,f)\mathbf{S}_j(f).\label{eq:fullrank}
\end{align}
Here,
$\mathbf{S}_j(f)$ is a time-invariant Hermitian positive-definite (and thus full-rank) matrix called a  spatial covariance matrix, which models the acoustic transfer characteristics of the $j$th source signal. $v_j(n,f)$ is 
a time-variant positive scalar, which models the power spectrum of the $j$th source signal.

The model parameters of the FCA, namely $v_j(n,f)$ and $\mathbf{S}_j(f)$, 
are
 estimated based on the maximization of  the following likelihood:
\begin{align}
\displaystyle &\prod_{n=1}^N\prod_{f=1}^Fp(\mathbf{y}(n,f))=\prod_{n=1}^N\prod_{f=1}^F\Biggl(\frac{1}{\pi^I\det (\mathbf{R}_1(n,f)+\mathbf{R}_2(n,f))}\notag\\
&\times\exp (-\mathbf{y}(n,f)^\textsf{H}(\mathbf{R}_1(n,f)+\mathbf{R}_2(n,f))^{-1}\mathbf{y}(n,f))\Biggr). \label{eq:like}
\end{align}
The likelihood (\ref{eq:like}) can be monotonically increased by an expectation-maximization (EM) algorithm~\cite{Dempster1977}.
The expectation step (E step) updates the conditional expectations 
\begin{align}
\bm{\mu}_j(n,f)&\triangleq\mathcal{E}(\mathbf{x}_j(n,f)\mid\mathbf{y}(n,f)),\\
\bm{\Phi}_j(n,f)&\triangleq\mathcal{E}(\mathbf{x}_j(n,f)\mathbf{x}_j(n,f)^\textsf{H}\mid\mathbf{y}(n,f)),
\end{align}
using the current parameter estimates $v_j^{(l)}(n,f)$ and $\mathbf{S}_j^{(l)}(f)$ by
\begin{align}
\bm{\mu}_j^{(l+1)}(n,f)&= v_j^{(l)}(n,f)\mathbf{S}^{(l)}_j(f)\notag\\
&\phantom{=}\times\Biggl(\sum_{k=1}^2v_k^{(l)}(n,f)\mathbf{S}_k^{(l)}(f)\Biggr)^{-1}\mathbf{y}(n,f),\label{eq:MMSE}\\
\bm{\Phi}_j^{(l+1)}(n,f)&= \bm{\mu}^{(l+1)}_j(n,f)\bm{\mu}^{(l+1)}_j(n,f)^\textsf{H}+v_1^{(l)}(n,f)\mathbf{S}_1^{(l)}(f)\notag\\
&\phantom{\leftarrow}\times\Biggl(\sum_{k=1}^2v_k^{(l)}(n,f)\mathbf{S}_k^{(l)}(f)\Biggr)^{-1}(v_2^{(l)}(n,f)\mathbf{S}_2^{(l)}(f)).\label{eq:postquadexp}
\end{align}
Here, the superscript $(\cdot)^{(l)}$ indicates that this variable is 
computed in the $l$th iteration, and $\triangleq$ means definition.
%The notation
%$v_{-j}(n,f)\mathbf{S}_{-j}(f)$ stands for $v_{2}(n,f)\mathbf{S}_{2}(f)$ for $j=1$ and $v_{1}(n,f)\mathbf{S}_{1}(f)$ for $j=2$. 
The maximization step (M step) updates the parameter estimates using $\bm{\Phi}_j^{(l+1)}(n,f)$
by
\begin{align}
v_j^{(l+1)}(n,f)&= \frac{1}{I}\text{tr}(\mathbf{S}_j^{(l)}(f)^{-1}\bm{\Phi}_j^{(l+1)}(n,f)),\label{eq:vupdate}\\
\mathbf{S}^{(l+1)}_j(f)&= \frac{1}{N}\sum_{n=1}^N\frac{1}{v_j^{(l+1)}(n,f)}\bm{\Phi}_j^{(l+1)}(n,f).\label{eq:Supdate}
\end{align}

Once the model parameters have been estimated, the source images can be estimated in various ways. For example, the minimum mean square error (MMSE) estimator of $\mathbf{x}_j(n,f)$ is given by (\ref{eq:MMSE}).

A major drawback of the conventional FCA is expensive computation.
Indeed, the above EM algorithm 
computes matrix inverses and matrix products frame-wise
 in (\ref{eq:MMSE}) and (\ref{eq:postquadexp}).
%Of course, the number of these frame-wise matrix operations can be {\it reduced}, {\it e.g.},
%by computing the common part 
%\begin{equation}
%v_j^{(l)}(n,f)\mathbf{S}^{(l)}_j(f)\Biggl(\displaystyle\sum_{k=1}^2v_k^{(l)}(n,f)\mathbf{S}_k^{(l)}(f)\Biggr)^{-1}
%\end{equation}
% prior to (\ref{eq:MMSE}) and (\ref{eq:postquadexp}).
%However, it is not straightforward how we can {\it eliminate} them.

\section{FastFCA}
\label{sec:fastfca}
This section describes the proposed FastFCA based on joint diagonalization of the spatial covariance matrices $\mathbf{S}_1(f)$ and $\mathbf{S}_2(f)$.
The joint diagonalization eliminates the 
 frame-wise computation of matrix inverses and matrix products, because they
reduce to mere scalar operations of diagonal entries for diagonal matrices.
The joint diagonalization is realized based on
the generalized eigenvalue problem of the matrix pair $\bigl(\mathbf{S}_1(f),\mathbf{S}_2(f)\bigr)$.
See Appendix~\ref{sec:appendix} for mathematical foundations of the generalized eigenvalue problem.

Let $\lambda_1^{(l)}(f), \lambda_2^{(l)}(f), \dots, \lambda_I^{(l)}(f)$
%$\lambda_1^{(l+1)}(f),\lambda_2^{(l+1)}(f),\dots,\lambda_I^{(l+1)}(f)$ 
be the generalized eigenvalues of $\bigl(\mathbf{S}_1^{(l)}(f),\mathbf{S}_2^{(l)}(f)\bigr)$, and  
$\mathbf{p}_1^{(l)}(f), \mathbf{p}_2^{(l)}(f), \dots, \mathbf{p}_I^{(l)}(f)$ be generalized eigenvectors of $\bigl(\mathbf{S}_1^{(l)}(f),\mathbf{S}_2^{(l)}(f)\bigr)$ that satisfy
%$\mathbf{p}_1^{(l+1)}(f),\mathbf{p}_2^{(l+1)}(f),\dots,\mathbf{p}_I^{(l+1)}(f)\in\mathbb{C}^I$ 
\begin{align}
\begin{cases}
\mathbf{S}_1^{(l)}(f)\mathbf{p}_i^{(l)}(f)=\lambda_i^{(l)}(f)\mathbf{S}_2^{(l)}(f)\mathbf{p}_i^{(l)}(f),\\
\mathbf{p}_i^{(l)}(f)^\textsf{H}\mathbf{S}_2^{(l)}(f)\mathbf{p}_k^{(l)}(f)=\delta_{ik}.
\end{cases}\label{eq:GEVD}
\end{align}
See Appendix~\ref{sec:appendix} for the existence of such $\lambda_1^{(l)}(f), \lambda_2^{(l)}(f), \dots, \lambda_I^{(l)}(f)$ and $\mathbf{p}_1^{(l)}(f), \mathbf{p}_2^{(l)}(f), \dots, \mathbf{p}_I^{(l)}(f)$.
(\ref{eq:GEVD}) can be rewritten in the following matrix forms:
\begin{align}
\begin{cases}
\mathbf{S}_1^{(l)}(f)\mathbf{P}^{(l)}(f)=\mathbf{S}_2^{(l)}(f)\mathbf{P}^{(l)}(f)\mathbf{\Lambda}^{(l)}(f),\\
\mathbf{P}^{(l)}(f)^\textsf{H}\mathbf{S}_2^{(l)}(f)\mathbf{P}^{(l)}(f)=\mathbf{I},
\end{cases}\label{eq:GEVDmatform}
\end{align}
where $\mathbf{P}^{(l)}(f)$ and $\mathbf{\Lambda}^{(l)}(f)$ are defined by
\begin{align}
\mathbf{P}^{(l)}(f)&\triangleq\begin{pmatrix}
\mathbf{p}_1^{(l)}(f)&\mathbf{p}_2^{(l)}(f)&\cdots&\mathbf{p}_I^{(l)}(f)
\end{pmatrix},\\
\mathbf{\Lambda}^{(l)}(f)&\triangleq\text{diag}\bigl(
\lambda_1^{(l)}(f),\lambda_2^{(l)}(f),\cdots,\lambda_I^{(l)}(f)\bigr).
\end{align}
From (\ref{eq:GEVDmatform}), we have
\begin{align}
\mathbf{P}^{(l)}(f)^\textsf{H}\mathbf{S}_1^{(l)}(f)\mathbf{P}^{(l)}(f)&=\mathbf{\Lambda}^{(l)}(f).\label{eq:sigcovdiag}
\end{align}
We see that joint diagonalization of $\mathbf{S}_1^{(l)}(f)$ and $\mathbf{S}_2^{(l)}(f)$ is realized by 
 the transformation
 $\mathbf{P}^{(l)}(f)^\textsf{H}(\cdot)\mathbf{P}^{(l)}(f)$, where $\mathbf{P}^{(l)}(f)$ is obtained based on the generalized eigenvalue problem of $\bigl(\mathbf{S}_1^{(l)}(f),\mathbf{S}_2^{(l)}(f)\bigr)$. 
%This can also be put in signal-processing terms: the transformation $\mathbf{P}^{(l)}(f)^\textsf{H}(\cdot)$ spatially decorrelates $\mathbf{x}_1(n,f)$ and $\mathbf{x}_2(n,f)$ jointly. Indeed, the covariance matrices of the transformed signals  $\mathbf{P}^{(l)}(f)^\textsf{H}\mathbf{x}_1(n,f)$ and $\mathbf{P}^{(l)}(f)^\textsf{H}\mathbf{x}_2(n,f)$ are diagonal as follows:
%\begin{align}
%\mathcal{E}(\mathbf{P}^{(l)}(f)^\textsf{H}\mathbf{x}_1(n,f)\mathbf{x}_1(n,f)^\textsf{H}\mathbf{P}^{(l)}(f))=
%\mathbf{P}^{(l)}(f)^\textsf{H}\mathbb{S}_1(f)\mathbf{P}^{(l)}(f)
%\end{align}

Now define the following variables that have been basis-transformed by $\mathbf{P}^{(l)}(f)$:
\begin{align}
\tilde{\mathbf{y}}^{(l)}(n,f)&\triangleq\mathbf{P}^{(l)}(f)^\textsf{H}\mathbf{y}(n,f),\label{eq:tildey}\\
\tilde{\bm{\mu}}_j^{(l+1)}(n,f)&\triangleq\mathbf{P}^{(l)}(f)^\textsf{H}\bm{\mu}_j^{(l+1)}(n,f),\label{eq:tildemu}\\
\tilde{\bm{\Phi}}_j^{(l+1)}(n,f)&\triangleq\mathbf{P}^{(l)}(f)^\textsf{H}\bm{\Phi}_j^{(l+1)}(n,f)\mathbf{P}^{(l)}(f),\label{eq:tildePhi}\\
\tilde{\mathbf{T}}_j^{(l)}(f)&\triangleq\mathbf{P}^{(l)}(f)^\textsf{H}\mathbf{S}_j^{(l)}(f)\mathbf{P}^{(l)}(f),\label{eq:tildeS}\\
&=\begin{cases}
\bm{\Lambda}^{(l)}(f),&j=1,\\
\mathbf{I},&j=2,
\end{cases}\label{eq:Sdiag}\\
%\tilde{\mathbf{S}}_{-j}^{(l)}(f)&=\mathbf{P}^{(l)}(f)^\textsf{H}\mathbf{S}_{-j}^{(l)}(f)\mathbf{P}^{(l)}(f),\label{eq:tildeSminus}\\
%&=\begin{cases}
%\mathbf{I},&j=1,\\
%\bm{\Lambda}^{(l)}(f),&j=2,
%\end{cases}\label{eq:Sminusdiag}\\
\tilde{\mathbf{S}}_j^{(l+1)}(f)&\triangleq\mathbf{P}^{(l)}(f)^\textsf{H}\mathbf{S}_j^{(l+1)}(f)\mathbf{P}^{(l)}(f).\label{eq:tildeT}
\end{align}
Here, the tilde indicates the basis transformation. Please be careful about the difference between $(\cdot)^{(l)}$ and $(\cdot)^{(l+1)}$.

The update rules (\ref{eq:MMSE})--(\ref{eq:Supdate}) 
are rewritten in terms of these new variables as in the following, where the indices $n$ and $f$ are omitted for brevity.
\begin{align}
&\tilde{\bm{\mu}}_j^{(l+1)}\notag\\
&=v_j^{(l)}(\mathbf{P}^{(l)})^\textsf{H}\mathbf{S}_j^{(l)}\Biggl(\sum_{k=1}^2v_k^{(l)}\mathbf{S}_k^{(l)}\Biggr)^{-1}\mathbf{y}\ \ \ (\because (\ref{eq:MMSE}), (\ref{eq:tildemu}))\\
&=v_j^{(l)}(\mathbf{P}^{(l)})^\textsf{H}\mathbf{S}_j^{(l)}\underbrace{\mathbf{P}^{(l)}(\mathbf{P}^{(l)})^{-1}}_{\displaystyle\mathbf{I}}\Biggl(\sum_{k=1}^2v_k^{(l)}\mathbf{S}_k^{(l)}\Biggr)^{-1}\notag\\
&\phantom{=}\times \underbrace{
((\mathbf{P}^{(l)})^\textsf{H})^{-1}(\mathbf{P}^{(l)})^\textsf{H}}_{\displaystyle\mathbf{I}}\mathbf{y}\\
%&=v_j^{(l)}\tilde{\mathbf{T}}_j^{(l)}(\mathbf{P}^{(l)})^{-1}\Biggl(((\mathbf{P}^{(l)})^\textsf{H})^{-1}\Biggl(\sum_{k=1}^2v_k^{(l)}\tilde{\mathbf{T}}_k^{(l)}\Biggr)(\mathbf{P}^{(l)})^{-1}\Biggr)^{-1}\notag\\
%&\phantom{=}\times((\mathbf{P}^{(l)})^\textsf{H})^{-1}\tilde{\mathbf{y}}^{(l)}\ \ \ (\because (\ref{eq:tildey}),(\ref{eq:tildeS}))\\
&=v_j^{(l)}\tilde{\mathbf{T}}_j^{(l)}\Biggl(\sum_{k=1}^2v_k^{(l)}\tilde{\mathbf{T}}_k^{(l)}\Biggr)^{-1}\tilde{\mathbf{y}}^{(l)}\ \ \ (\because (\ref{eq:tildey}),(\ref{eq:tildeS}))\label{eq:updatemutilde}\\
&=\begin{cases}
v^{(l)}_1\bm{\Lambda}^{(l)}(v^{(l)}_1\bm{\Lambda}^{(l)}+v^{(l)}_2\mathbf{I})^{-1}\tilde{\mathbf{y}}^{(l)},&j=1\\
v^{(l)}_2(v^{(l)}_1\bm{\Lambda}^{(l)}+v^{(l)}_2\mathbf{I})^{-1}\tilde{\mathbf{y}}^{(l)},&j=2
\end{cases}\ \ \ (\because (\ref{eq:tildeS}),(\ref{eq:Sdiag})).\label{eq:MMSEtilde}
\end{align}
\begin{align}
&\tilde{\mathbf{\Phi}}^{(l+1)}_j\notag\\
&=(\mathbf{P}^{(l)})^\textsf{H}\bm{\mu}^{(l+1)}_j(\bm{\mu}^{(l+1)}_j)^\textsf{H}\mathbf{P}^{(l)}
+v_1^{(l)}
(\mathbf{P}^{(l)})^\textsf{H}\mathbf{S}_1^{(l)}\notag\\
&\phantom{\leftarrow}\times\Biggl(\sum_{k=1}^2v_k^{(l)}\mathbf{S}_k^{(l)}\Biggr)^{-1}(v_2^{(l)}\mathbf{S}_2^{(l)})\mathbf{P}^{(l)}\ \ \ (\because (\ref{eq:postquadexp}),(\ref{eq:tildePhi}))\\
&=(\mathbf{P}^{(l)})^\textsf{H}\bm{\mu}^{(l+1)}_j(\bm{\mu}^{(l+1)}_j)^\textsf{H}\mathbf{P}^{(l)}+v_1^{(l)}
(\mathbf{P}^{(l)})^\textsf{H}\mathbf{S}_1^{(l)}\underbrace{\mathbf{P}^{(l)}(\mathbf{P}^{(l)})^{-1}}_{\displaystyle\mathbf{I}}\notag\\
&\phantom{\leftarrow}\times\Biggl(\sum_{k=1}^2v_k^{(l)}\mathbf{S}_k^{(l)}\Biggr)^{-1}\underbrace{
((\mathbf{P}^{(l)})^\textsf{H})^{-1}(\mathbf{P}^{(l)})^\textsf{H}}_{\displaystyle\mathbf{I}}(v_2^{(l)}\mathbf{S}_2^{(l)})\mathbf{P}^{(l)}\\
%&=\tilde{\bm{\mu}}_j^{(l+1)}(\tilde{\bm{\mu}}_j^{(l+1)})^\textsf{H}+v^{(l)}_1\bm{\Lambda}^{(l)}(\mathbf{P}^{(l)})^{-1}\Biggl(((\mathbf{P}^{(l)})^\textsf{H})^{-1}\notag\\
%&\phantom{=}\Biggl(\sum_{k=1}^2v_k^{(l)}\tilde{\mathbf{T}}_k^{(l)}\Biggr)(\mathbf{P}^{(l)})^{-1}\Biggr)^{-1}(v^{(l)}_{2}((\mathbf{P}^{(l)})^\textsf{H})^{-1})\\
%&\phantom{=}(\because (\ref{eq:postquadexp}),(\ref{eq:tildemu}),(\ref{eq:tildePhi}),(\ref{eq:tildeS}),(\ref{eq:Sdiag}))\notag\\
&=\tilde{\bm{\mu}}^{(l+1)}_j(\tilde{\bm{\mu}}^{(l+1)}_j)^\textsf{H}+v^{(l)}_1v^{(l)}_2\bm{\Lambda}^{(l)}(v^{(l)}_1\bm{\Lambda}^{(l)}+v^{(l)}_2\mathbf{I})^{-1}\label{eq:updatePhitildeeff}\\
&\phantom{=}\notag (\because (\ref{eq:tildemu}),(\ref{eq:Sdiag})).
\end{align}
\begin{align}
&v_j^{(l+1)}\notag\\
&=\frac{1}{I}\text{tr}\Biggl((\mathbf{S}_j^{(l)})^{-1}\underbrace{((\mathbf{P}^{(l)})^\textsf{H})^{-1}(\mathbf{P}^{(l)})^\textsf{H}}_{\displaystyle \mathbf{I}}\bm{\Phi}_j^{(l+1)}\underbrace{\mathbf{P}^{(l)}(\mathbf{P}^{(l)})^{-1}}_{\displaystyle\mathbf{I}}\Biggr)\notag\\
&\phantom{=}(\because (\ref{eq:vupdate}))\\
%&=\frac{1}{I}\text{tr}(\mathbf{P}^{(l)}(\tilde{\bm{\mathbf{T}}}^{(l)}_j)^{-1}(\mathbf{P}^{(l)})^\textsf{H}((\mathbf{P}^{(l)})^\textsf{H})^{-1}\tilde{\mathbf{\Phi}}^{(l+1)}_j(\mathbf{P}^{(l)})^{-1})\\
%&\phantom{=}(\because (\ref{eq:vupdate}),(\ref{eq:tildePhi}),(\ref{eq:tildeS}))\notag\\
&=\frac{1}{I}\text{tr}((\tilde{\mathbf{T}}^{(l)}_j)^{-1}\tilde{\bm{\Phi}}^{(l+1)}_j)\ \ \ (\because (\ref{eq:tildePhi}),(\ref{eq:tildeS}))\\
&=\displaystyle\begin{cases}
\displaystyle\frac{1}{I}\text{tr}((\bm{\Lambda}^{(l)})^{-1}\tilde{\mathbf{\Phi}}^{(l+1)}_1),&j=1\vspace{1mm}\\
\displaystyle\frac{1}{I}\text{tr}(\tilde{\mathbf{\Phi}}^{(l+1)}_2),&j=2\\
\end{cases}\ \ \ (\because (\ref{eq:tildeS}),(\ref{eq:Sdiag})).\label{eq:updateveff}
\end{align}
\begin{align}
\tilde{\mathbf{S}}^{(l+1)}_j&=\frac{1}{N}\sum_{n=1}^N\frac{1}{v^{(l+1)}_j}\tilde{\mathbf{\Phi}}^{(l+1)}_j\ \ \ (\because (\ref{eq:Supdate}),(\ref{eq:tildePhi}),(\ref{eq:tildeT})).\label{eq:updateStildeeff}
\end{align}

The generalized eigenvectors $\mathbf{P}^{(l+1)}(f)$ and the generalized eigenvalues $\bm{\Lambda}^{(l+1)}(f)$ of $(\mathbf{S}^{(l+1)}_1(f),\mathbf{S}^{(l+1)}_2(f))$ have also to be computed to be used in  the next iteration. Note that $\mathbf{P}^{(l+1)}(f)$ is needed to compute $\tilde{\mathbf{y}}^{(l+1)}(n,f)$.
One way of doing this is to transform $\tilde{\mathbf{S}}_j^{(l+1)}(f)$ back to $\mathbf{S}_j^{(l+1)}(f)$ by
\begin{equation}
\mathbf{S}_j^{(l+1)}(f)=(\mathbf{P}^{(l)}(f)^\textsf{H})^{-1}\tilde{\mathbf{S}}^{(l+1)}_j(f)\mathbf{P}^{(l)}(f)^{-1}\ \ \ (\because (\ref{eq:tildeT})),\label{eq:recoverS}
\end{equation}
and to solve the generalized eigenvalue problem of $(\mathbf{S}^{(l+1)}_1(f),\mathbf{S}^{(l+1)}_2(f))$.

It is possible to compute
 $\mathbf{P}^{(l+1)}(f)$ and $\bm{\Lambda}^{(l+1)}(f)$ more efficiently
without transforming $\tilde{\mathbf{S}}_j^{(l+1)}(f)$ back to $\mathbf{S}_j^{(l+1)}(f)$.
Indeed, $\mathbf{P}^{(l+1)}(f)$ and $\bm{\Lambda}^{(l+1)}(f)$ can be computed as follows:
%which includes matrix inverses and matrix products, in order
%to compute $\mathbf{P}^{(l+1)}(f)$ and $\mathbf{\Lambda}^{(l+1)}(f)$. However, it turns out that even this can be bypassed, which further reduces the number of matrix operations.
%Specifically, we can show the following equations:
\begin{align}
\mathbf{P}^{(l+1)}(f)&=\mathbf{P}^{(l)}(f)\mathbf{Q}^{(l+1)}(f),\label{eq:updatePeff}\\
\bm{\Lambda}^{(l+1)}(f)&=\bm{\Sigma}^{(l+1)}(f),\label{eq:updateLambdaeff}
\end{align}
where $\mathbf{Q}^{(l+1)}(f)$ 
and $\mathbf{\Sigma}^{(l+1)}(f)$ are 
the generalized eigenvectors and the generalized eigenvalues of  
$(\tilde{\mathbf{S}}^{(l+1)}_1(f),\tilde{\mathbf{S}}^{(l+1)}_2(f))$:
\begin{align}
\mathbf{Q}^{(l+1)}(f)&\triangleq\begin{pmatrix}
\mathbf{q}_1^{(l+1)}(f)&\mathbf{q}_2^{(l+1)}(f)&\cdots&\mathbf{q}_I^{(l+1)}(f)
\end{pmatrix},\\
\mathbf{\Sigma}^{(l+1)}(f)&\triangleq\text{diag}\bigl(
\sigma_1^{(l+1)}(f),\sigma_2^{(l+1)}(f),\cdots,\sigma_I^{(l+1)}(f)\bigr).
\end{align}
Here, $\sigma_1^{(l+1)}(f),\sigma_2^{(l+1)}(f),\dots,\sigma_I^{(l+1)}(f)$ denote the generalized eigenvalues  of $\bigl(\tilde{\mathbf{S}}_1^{(l+1)}(f),\tilde{\mathbf{S}}_2^{(l+1)}(f)\bigr)$, and 
$\mathbf{q}_1^{(l+1)}(f),\mathbf{q}_2^{(l+1)}(f),\dots,$\\$\mathbf{q}_I^{(l+1)}(f)$ denote
generalized eigenvectors  of $\bigl(\tilde{\mathbf{S}}_1^{(l+1)}(f),\tilde{\mathbf{S}}_2^{(l+1)}(f)\bigr)$ that satisfy
\begin{align}
\begin{cases}
\tilde{\mathbf{S}}_1^{(l+1)}(f)\mathbf{q}_i^{(l+1)}(f)=\sigma_i^{(l+1)}(f)\tilde{\mathbf{S}}_2^{(l+1)}(f)\mathbf{q}_i^{(l+1)}(f),\\
\mathbf{q}_i^{(l+1)}(f)^\textsf{H}\tilde{\mathbf{S}}_2^{(l+1)}(f)\mathbf{q}_k^{(l+1)}(f)=\delta_{ik}.
\end{cases}\label{eq:GEVDeff5}
\end{align}
Note that (\ref{eq:GEVDeff5}) can also be rewritten in matrix form as follows:
\begin{align}
\begin{cases}
\tilde{\mathbf{S}}_1^{(l+1)}(f)\mathbf{Q}^{(l+1)}(f)=\tilde{\mathbf{S}}_2^{(l+1)}(f)\mathbf{Q}^{(l+1)}(f)\mathbf{\Sigma}^{(l+1)}(f),\\
\mathbf{Q}^{(l+1)}(f)^\textsf{H}\tilde{\mathbf{S}}_2^{(l+1)}(f)\mathbf{Q}^{(l+1)}(f)=\mathbf{I}.
\end{cases}\hspace{-2mm}\label{eq:GEVDeff5matform}
\end{align}

To show (\ref{eq:updatePeff}) and (\ref{eq:updateLambdaeff}), it is sufficient to show
\begin{align}
\begin{cases}
\mathbf{S}^{(l+1)}_1(f)(\mathbf{P}^{(l)}(f)\mathbf{Q}^{(l+1)}(f))\\
=\mathbf{S}^{(l+1)}_2(f)(\mathbf{P}^{(l)}(f)\mathbf{Q}^{(l+1)}(f))\mathbf{\Sigma}^{(l+1)}(f),\\
(\mathbf{P}^{(l)}(f)\mathbf{Q}^{(l+1)}(f))^\textsf{H}\mathbf{S}^{(l+1)}_2(f)(\mathbf{P}^{(l)}(f)\mathbf{Q}^{(l+1)}(f))=\mathbf{I}.
\end{cases}\hspace{-4mm}
\end{align}
This can be shown as follows:
\begin{align}
&\mathbf{S}^{(l+1)}_1(f)\mathbf{P}^{(l)}(f)\mathbf{Q}^{(l+1)}(f)\notag\\
&=\underbrace{(\mathbf{P}^{(l)}(f)^\textsf{H})^{-1}\mathbf{P}^{(l)}(f)^\textsf{H}}_{\displaystyle\mathbf{I}}\mathbf{S}^{(l+1)}_1(f)\mathbf{P}^{(l)}(f)\mathbf{Q}^{(l+1)}(f)\\
&=(\mathbf{P}^{(l)}(f)^\textsf{H})^{-1}\tilde{\mathbf{S}}_1^{(l+1)}(f)\mathbf{Q}^{(l+1)}(f)\ \ \ (\because (\ref{eq:tildeT}))\\
&=(\mathbf{P}^{(l)}(f)^\textsf{H})^{-1}\tilde{\mathbf{S}}_2^{(l+1)}(f)\mathbf{Q}^{(l+1)}(f)\mathbf{\Sigma}^{(l+1)}(f)\ \ \ (\because (\ref{eq:GEVDeff5matform}))\\
&=\mathbf{S}^{(l+1)}_2(f)\mathbf{P}^{(l)}(f)\mathbf{Q}^{(l+1)}(f)\mathbf{\Sigma}^{(l+1)}(f)\ \ \ (\because (\ref{eq:tildeT})),\label{eq:GEVDeff2}\\
&\notag (\mathbf{P}^{(l)}(f)\mathbf{Q}^{(l+1)}(f))^\textsf{H}\mathbf{S}^{(l+1)}_2(f)(\mathbf{P}^{(l)}(f)\mathbf{Q}^{(l+1)}(f))\\
&=\mathbf{Q}^{(l+1)}(f)^\textsf{H}\tilde{\mathbf{S}}^{(l+1)}_2(f)\mathbf{Q}^{(l+1)}(f)\ \ \ (\because (\ref{eq:tildeT}))\\
&=\mathbf{I}\ \ \ (\because (\ref{eq:GEVDeff5matform})).\label{eq:orteff2}
\end{align}

As seen in (\ref{eq:MMSEtilde}) and (\ref{eq:updatePhitildeeff}), the proposed FastFCA
does not require
frame-wise matrix inversion or matrix multiplication, owing to the joint diagonalization.
The additional generalized eigenvalue problem and matrix multiplication in (\ref{eq:updatePeff}) are only required once in each frequency bin per iteration instead of at all time-frequency points, and the FastFCA leads to significantly reduced computation overall.

The algorithm is summarized as follows with $L$ being the number of iterations:

\begin{algorithm}
FastFCA.
\begin{algorithmic}[1]
\STATE Set initial values $v_j^{(0)}(n,f)$, $\mathbf{P}^{(0)}(f)$, and $\bm{\Lambda}^{(0)}(f)$.
\FOR{\ $l= 0$ \ to \ $L-1$}
\STATE Compute $\tilde{\mathbf{y}}^{(l)}(n,f)$ by (\ref{eq:tildey}).
\STATE Compute  $\tilde{\bm{\mu}}_j^{(l+1)}(n,f)$ by (\ref{eq:MMSEtilde}).
\STATE Compute $\tilde{\mathbf{\Phi}}_j^{(l+1)}(n,f)$ by (\ref{eq:updatePhitildeeff}).
\STATE Compute $v_j^{(l+1)}(n,f)$ by (\ref{eq:updateveff}).
\STATE Compute $\tilde{\mathbf{S}}_j^{(l+1)}(f)$ by (\ref{eq:updateStildeeff}).
\STATE Compute $\mathbf{Q}^{(l+1)}(f)$ and $\bm{\Lambda}^{(l+1)}(f)$ by solving the generalized eigenvalue problem of $\bigl(\tilde{\mathbf{S}}_1^{(l+1)}(f),\tilde{\mathbf{S}}^{(l+1)}_2(f)\bigr)$.
\STATE Compute $\mathbf{P}^{(l+1)}(f)$ by (\ref{eq:updatePeff}).
\ENDFOR
\STATE Compute $\bm{\mu}^{(L)}_j(n,f)=(\mathbf{P}^{(L-1)}(f)^\textsf{H})^{-1}\tilde{\bm{\mu}}^{(L)}_j(n,f)$, and output it as the estimate of the source image $\mathbf{x}_j(n,f)$.
\end{algorithmic}
\end{algorithm}

%
%
%\begin{prop}
%
%
%Let $\alpha,\beta\in\mathbb{C}$. $(\alpha\mathbf{\Phi}+\beta\mathbf{\Psi})^{-1}=\mathbf{P}\mathbf{\Sigma}\mathbf{P}^\textsf{H}$, where 
%\begin{align}
%\mathbf{P}&=\begin{bmatrix}
%\mathbf{p}_1&\mathbf{p}_2&\cdots&\mathbf{p}_d
%\end{bmatrix},\\
%\mathbf{\Lambda}(f)&=\begin{bmatrix}
%\displaystyle\frac{1}{\alpha\lambda_1+\beta}&0&\cdots&0\\
%0&\displaystyle\frac{1}{\alpha\lambda_2+\beta}&\cdots&0\\
%\vdots&\vdots&\ddots&\vdots\\
%0&0&\cdots&\displaystyle\frac{1}{\alpha\lambda_d+\beta}
%\end{bmatrix}.
%\end{align}
%%\item $\det(\alpha\mathbf{\Phi}+\beta\mathbf{\Psi})=\det\mathbf{\Psi}\det(\alpha\mathbf{\Lambda}+\beta\mathbf{I})=\det\mathbf{\Psi}\prod_{i=1}^d(\alpha\lambda_i+\beta)$, where
%%\begin{equation}
%%\mathbf{\Lambda}=\begin{bmatrix}
%%\lambda_1&0&\cdots&0\\
%%0&\displaystyle\lambda_2&\cdots&0\\
%%\vdots&\vdots&\ddots&\vdots\\
%%0&0&\cdots&\lambda_d
%%\end{bmatrix}.
%%\end{equation}
%\end{prop}
%\begin{proof}
%\end{proof}

\section{source separation Experiment}
\label{sec:exp}
We conducted a source separation experiment to compare
the proposed FastFCA with the conventional FCA~\cite{Duong2010} (see Section~\ref{sec:FCA}).
Both algorithms were implemented in MATLAB (R2013a) running on an Intel i7-2600 3.4-GHz octal-core CPU.
Observed mixtures were generated by convolving 8\,s-long
English speech signals with room impulse responses
measured in a room shown in Fig.~\ref{fig:setting}.
The reverberation time $\text{RT}_{60}$ was 130, 200, 250, 300, 370, or 440\,ms.
Ten trials with different speaker combinations were conducted
for each reverberation time.
The initial values were computed 
based on estimating
the spatial covariance matrices using the time-frequency masks by Sawada's method~\cite{Sawada2011TASLP}.
The source images were estimated in the MMSE sense in both algorithms.
Other conditions are shown in Table~\ref{tab:cond}.

Figure~\ref{fig:RTF} shows the real time factor (RTF) of the EM algorithm for both methods with averaging over all trials and all reverberation times.
Figure~\ref{fig:SDR} shows the signal-to-distortion ratio (SDR)~\cite{Vincent2006} averaged over the two sources and all trials.
The input SDR was 0\,dB.
The FastFCA was over 250 times faster than the FCA with virtually the same SDR.

\begin{figure}[tb]\centering
\includegraphics[width=0.9\columnwidth]{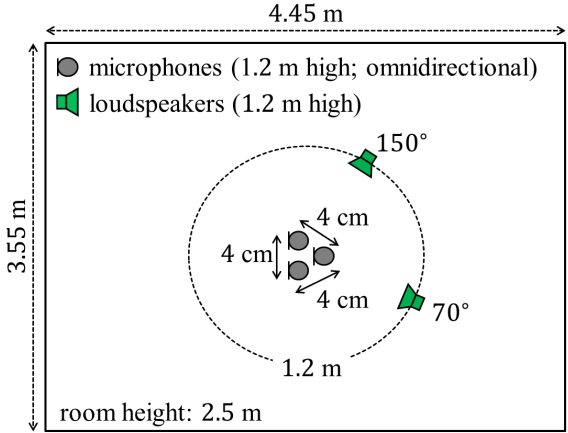}\vspace{-4mm}
\caption{Experimental setting.}\vspace{-3mm}
\label{fig:setting}
\end{figure}

\begin{table}
\centering
\caption{Experimental conditions.}
\label{tab:cond}
\begin{tabular}{ll}\hline
sampling frequency &16\,kHz\\
frame length&1024 (64\,ms)\\
frame shift&512 (32\,ms)\\
window&square root of Hann\\
number of EM iterations&10\\
\hline
\end{tabular}\vspace{-2mm}
\end{table}

\begin{figure}[t]
\centering
\includegraphics[width=\columnwidth]{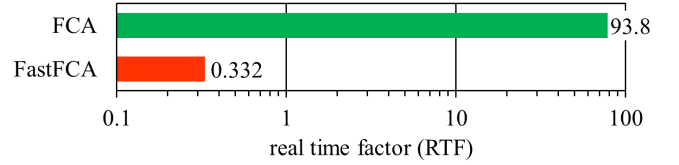}\vspace{-4mm}
\caption{Real time factor (RTF).}
\label{fig:RTF}
\end{figure}

\begin{figure}[tb]
\centering
\includegraphics[width=\columnwidth]{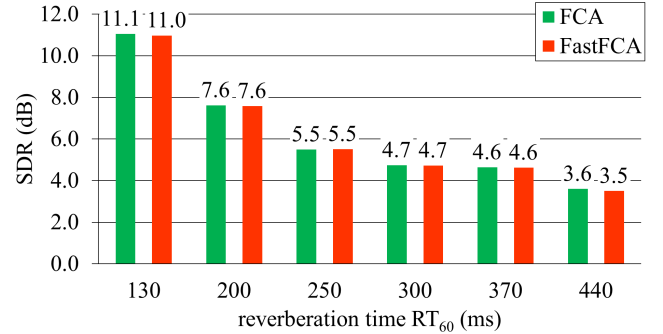}\vspace{-4mm}
\caption{Signal-to-distortion ratio (SDR).}\vspace{-5mm}
\label{fig:SDR}
\end{figure}

\section{Conclusions}
\label{sec:conc}
In this paper, we proposed the FastFCA, a fast algorithm for estimating the FCA parameters in the two-source case with virtually the same SDR as the conventional  algorithm~\cite{Duong2010}. The future work includes 
application to denoising tasks, such as CHiME-3~\cite{Barker2015} and
extension to more than two source signals.

\appendix
\section{Mathematical foundations of the Generalized Eigenvalue Problem}
\label{sec:appendix}
This appendix summarizes mathematical foundations of the generalized eigenvalue problem.
Throughout this appendix, $D$ denotes a positive integer, 
$\mathbf{\Phi}$ and $\mathbf{\Psi}$ complex square matrices of order $D$, and $\lambda$ a complex number.

$\lambda$ is said to be a {\it generalized eigenvalue} of the pair $(\mathbf{\Phi},\mathbf{\Psi})$,
when there exists $\mathbf{p}\in\mathbb{C}^D-\{\mathbf{0}\}$ such that $\mathbf{\Phi p}=\lambda\mathbf{\Psi p}$.
When $\lambda$ is a generalized eigenvalue of $(\mathbf{\Phi},\mathbf{\Psi})$
and $\mathbf{p}\in\mathbb{C}^D-\{\mathbf{0}\}$ satisfies $\mathbf{\Phi p}=\lambda\mathbf{\Psi p}$, $\mathbf{p}$  is said to be a {\it generalized eigenvector} of $(\mathbf{\Phi},\mathbf{\Psi})$ corresponding to $\lambda$.

The polynomial of $\lambda$, $\det (\mathbf{\Phi}-\lambda\mathbf{\Psi})$, is called the {\it characteristic polynomial} of $(\mathbf{\Phi},\mathbf{\Psi})$. It can be shown that
$\lambda$ is a generalized eigenvalue of  $(\mathbf{\Phi},\mathbf{\Psi})$ if and only if $\lambda$ is a root of the 
characteristic polynomial $\det (\mathbf{\Phi}-\lambda\mathbf{\Psi})$.
Indeed, there exists $\mathbf{p}\in\mathbb{C}^D-\{\mathbf{0}\}$ such that $(\mathbf{\Phi}-\lambda\mathbf{\Psi})\mathbf{p}=\mathbf{0}$ if and only if the columns of $\mathbf{\Phi}-\lambda\mathbf{\Psi}$ are linearly dependent, {\it i.e.}, $\det(\mathbf{\Phi}-\lambda\mathbf{\Psi})=0$.

If $\mathbf{\Psi}$ is nonsingular, the fundamental theorem of algebra implies that
the characteristic polynomial $\det (\mathbf{\Phi}-\lambda\mathbf{\Psi})=\det\mathbf{\Psi}\det (\mathbf{\Psi}^{-1}\mathbf{\Phi}-\lambda\mathbf{I})$ has exactly $D$ roots.
In this sense, $(\mathbf{\Phi},\mathbf{\Psi})$ has exactly $D$ generalized eigenvalues.

\begin{theorem}\label{thm}
Suppose $\mathbf{\Phi}$ is Hermitian, 
$\mathbf{\Psi}$ Hermitian positive definite,
and $\lambda_1,\lambda_2,\dots,\lambda_D$ the generalized eigenvalues of $(\mathbf{\Phi},\mathbf{\Psi})$.
There exist $\mathbf{p}_1,\mathbf{p}_2,\dots,\mathbf{p}_D\in\mathbb{C}^D$ such that each $\mathbf{p}_k$ is a generalized eigenvector of 
$(\mathbf{\Phi},\mathbf{\Psi})$ corresponding to $\lambda_k$ and 
$\mathbf{p}_k^\textsf{H}\mathbf{\Psi}\mathbf{p}_l=\delta_{kl}$.
\end{theorem}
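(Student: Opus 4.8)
The plan is to reduce the generalized eigenvalue problem for the pair $(\mathbf{\Phi},\mathbf{\Psi})$ to an ordinary eigenvalue problem for a single Hermitian matrix, for which the spectral theorem supplies an orthonormal eigenbasis. Since $\mathbf{\Psi}$ is Hermitian positive definite, it admits a Hermitian positive-definite square root $\mathbf{\Psi}^{1/2}$ (obtained by diagonalizing $\mathbf{\Psi}$ and taking the positive square roots of its eigenvalues), which is invertible with inverse $\mathbf{\Psi}^{-1/2}$. First I would introduce the substitution $\mathbf{q}=\mathbf{\Psi}^{1/2}\mathbf{p}$ and rewrite $\mathbf{\Phi p}=\lambda\mathbf{\Psi p}$ as $\mathbf{A q}=\lambda\mathbf{q}$, where $\mathbf{A}\triangleq\mathbf{\Psi}^{-1/2}\mathbf{\Phi}\mathbf{\Psi}^{-1/2}$. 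A direct computation using $\mathbf{\Phi}^\textsf{H}=\mathbf{\Phi}$ and $(\mathbf{\Psi}^{-1/2})^\textsf{H}=\mathbf{\Psi}^{-1/2}$ shows $\mathbf{A}^\textsf{H}=\mathbf{A}$, so $\mathbf{A}$ is Hermitian.

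Next I would invoke the spectral theorem for the Hermitian matrix $\mathbf{A}$: there exist an orthonormal basis $\mathbf{q}_1,\dots,\mathbf{q}_D$ of $\mathbb{C}^D$ and real numbers $\mu_1,\dots,\mu_D$ with $\mathbf{A q}_k=\mu_k\mathbf{q}_k$ and $\mathbf{q}_k^\textsf{H}\mathbf{q}_l=\delta_{kl}$. Setting $\mathbf{p}_k\triangleq\mathbf{\Psi}^{-1/2}\mathbf{q}_k$, I would verify the two required properties by routine substitution: multiplying $\mathbf{A q}_k=\mu_k\mathbf{q}_k$ on the left by $\mathbf{\Psi}^{1/2}$ yields $\mathbf{\Phi p}_k=\mu_k\mathbf{\Psi p}_k$, so $\mathbf{p}_k$ is a generalized eigenvector corresponding to the generalized eigenvalue $\mu_k$; and $\mathbf{p}_k^\textsf{H}\mathbf{\Psi}\mathbf{p}_l=\mathbf{q}_k^\textsf{H}\mathbf{\Psi}^{-1/2}\mathbf{\Psi}\mathbf{\Psi}^{-1/2}\mathbf{q}_l=\mathbf{q}_k^\textsf{H}\mathbf{q}_l=\delta_{kl}$ supplies the $\mathbf{\Psi}$-orthonormality.

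The remaining point, which I regard as the main obstacle, is to match the eigenvalues $\mu_k$ of $\mathbf{A}$ to the prescribed list $\lambda_1,\dots,\lambda_D$ so that $\mathbf{p}_k$ corresponds to $\lambda_k$ for each index $k$, and in particular to handle repeated eigenvalues correctly. For this I would use the identity
\[
\det(\mathbf{A}-\mu\mathbf{I})=(\det\mathbf{\Psi}^{-1/2})^2\det(\mathbf{\Phi}-\mu\mathbf{\Psi})=(\det\mathbf{\Psi})^{-1}\det(\mathbf{\Phi}-\mu\mathbf{\Psi}),
\]
obtained by factoring $\mathbf{A}-\mu\mathbf{I}=\mathbf{\Psi}^{-1/2}(\mathbf{\Phi}-\mu\mathbf{\Psi})\mathbf{\Psi}^{-1/2}$. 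This shows that the characteristic polynomial of $\mathbf{A}$ and the characteristic polynomial of $(\mathbf{\Phi},\mathbf{\Psi})$ have identical roots with identical multiplicities. Hence the multiset $\{\mu_1,\dots,\mu_D\}$ coincides with the multiset of generalized eigenvalues $\{\lambda_1,\dots,\lambda_D\}$ already established in this appendix, and I would simply reorder the orthonormal vectors $\mathbf{q}_k$ so that $\mu_k=\lambda_k$ for every $k$. Repeated eigenvalues require no special treatment: the spectral theorem already furnishes an orthonormal basis within each eigenspace, so any labeling consistent with the prescribed multiplicities can be realized, which completes the argument.
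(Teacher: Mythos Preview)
Your proposal is correct and follows essentially the same route as the paper: both reduce the generalized problem to an ordinary Hermitian eigenvalue problem via a square-root of $\mathbf{\Psi}$ (you use the Hermitian square root $\mathbf{\Psi}^{-1/2}$ directly, the paper writes it out as $\mathbf{U}\mathbf{\Sigma}^{-1/2}$ from the spectral factorization $\mathbf{\Psi}=\mathbf{U}\mathbf{\Sigma}\mathbf{U}^\textsf{H}$, which differs only by a unitary factor), then invoke the spectral theorem and pull back the orthonormal eigenvectors. Your explicit discussion of matching the multiset of eigenvalues via the characteristic-polynomial identity is, if anything, slightly more careful than the paper's version.
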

\begin{proof}
Since $\mathbf{\Psi}$ is Hermitian positive definite,
there exists a unitary matrix $\mathbf{U}$ and a diagonal matrix $\mathbf{\Sigma}$ with all diagonal entries being positive such that $\mathbf{\Psi}=\mathbf{U\Sigma}\mathbf{U}^\textsf{H}$.
Define a Hermitian matrix $\tilde{\mathbf{\Phi}}$ by $\tilde{\mathbf{\Phi}}=\mathbf{\Sigma}^{-\frac{1}{2}}\mathbf{U}^\textsf{H}\mathbf{\Phi}\mathbf{U}\mathbf{\Sigma}^{-\frac{1}{2}}$. Since
$\det(\tilde{\mathbf{\Phi}}-\mu\mathbf{I})
=\det(\mathbf{\Sigma}^{-\frac{1}{2}}\mathbf{U}^\textsf{H}(\mathbf{\Phi}-\mu\mathbf{U\Sigma}\mathbf{U}^\textsf{H})\mathbf{U}\mathbf{\Sigma}^{-\frac{1}{2}})
=\det(\mathbf{\Psi}^{-1})\det(\mathbf{\Phi}-\mu\mathbf{\Psi})$, 
$\lambda_1,\lambda_2,\dots,\lambda_D$ are the eigenvalues of $\tilde{\mathbf{\Phi}}$. Let $\mathbf{q}_1,\mathbf{q}_2,\dots,\mathbf{q}_D\in\mathbb{C}^D$ be vectors
such that
each $\mathbf{q}_k$ is an eigenvector of 
$\tilde{\mathbf{\Phi}}$ corresponding to $\lambda_k$ and 
$\mathbf{q}_k^\textsf{H}\mathbf{q}_l=\delta_{kl}$.
Define $\mathbf{p}_k$ by $\mathbf{p}_k=\mathbf{U}\mathbf{\Sigma}^{-\frac{1}{2}}\mathbf{q}_k$.
It follows that
%\begin{align}
$\mathbf{\Phi}\mathbf{p}_k=\mathbf{\Phi}\mathbf{U}\mathbf{\Sigma}^{-\frac{1}{2}}\mathbf{q}_k
=(\mathbf{\Sigma}^{-\frac{1}{2}}\mathbf{U}^\textsf{H})^{-1}\mathbf{\Sigma}^{-\frac{1}{2}}\mathbf{U}^\textsf{H}\mathbf{\Phi}\mathbf{U}\mathbf{\Sigma}^{-\frac{1}{2}}\mathbf{q}_k
=\mathbf{U}\mathbf{\Sigma}^{\frac{1}{2}}\tilde{\mathbf{\Phi}}\mathbf{q}_k
=\lambda_k\mathbf{U}\mathbf{\Sigma}^{\frac{1}{2}}\mathbf{q}_k
%=\lambda_i\mathbf{U}\mathbf{\Sigma}^{\frac{1}{2}}\mathbf{\Sigma}^{\frac{1}{2}}\mathbf{U}^\textsf{H}\mathbf{p}_i
=\lambda_k\mathbf{\Psi}\mathbf{p}_k
$.
Furthermore, 
$\mathbf{p}_k^\textsf{H}\mathbf{\Psi}\mathbf{p}_l=\mathbf{q}_k^\textsf{H}\mathbf{q}_l=\delta_{kl}$.
\end{proof}

\bibliographystyle{IEEEbib}

\end{document}